\newtheorem{theorem}{Theorem}
\newtheorem{lemma}{Lemma}
\newtheorem{relation}{Relation}
\newtheorem{definition}{Definition}
\newcommand{\Avoid}{\mathcal{O}}
\newcommand{\avoid}{o}
\journal{Linear Algebra and its Applications}
\begin{document}

\begin{frontmatter}



\title{Avoidance Markov Metrics and Node Pivotality Ranking}


\author{Golshan Golnari, Zhi-Li Zhang, and Daniel Boley}

\address{Computer Science and Engineering Department, University of Minnesota}

\begin{abstract}
We introduce the \textit{avoidance} Markov metrics and theories which provide more flexibility in the design of random walk and impose new conditions on the walk to \textit{avoid} (or \textit{transit}) a specific node (or a set of nodes) before the stopping criteria. These theories help with applications that cannot be modeled by classical Markov chains and require more flexibility and intricacy in their modeling. Specifically, we use them for the \textit{pivotality} ranking of the nodes in a network reachabilities. 
More often than not, it is not sufficient
simply to know whether a source node $s$ can reach a target node $t$
in the network and additional information associated with reachability,
such as how long or how many possible ways node $s$ may take to
reach node $t$, is required. In this paper, we analyze the pivotality of the nodes which capture how pivotal a role that a node $k$ or a subset
of nodes $S$ may play in the reachability from node $s$ to node $t$ in a
given network.
Through some synthetic and
real-world network examples, we show that these metrics build a
powerful ranking tool for the nodes based on their pivotality in the reachability.

\end{abstract}

\begin{keyword}
Avoidance Markov metrics, network analysis, pivotality ranking

\end{keyword}

\end{frontmatter}



\section{Introduction}
Markov metrics have proved to be a powerful tool for analyzing and solving a variety of problems \cite{qian2005power,steele2001random,ramage2009random,wang2012deterministic}. Hitting time, for instance, as the most well-known Markov metric has been exploited vastly in different network analysis applications. Fouss et al. \cite{fouss2007random} used the hitting time (and commute time) as the measure of similarity between nodes in a recommendation system. Sarkar et al. \cite{sarkar2008fast} developed a fast proximity search in large networks by means of hitting time. Chen et al. \cite{chen2008clustering} presented a clustering algorithm via hitting time on directed graphs. However, all of these works are confined to the applications of the \textit{classical} hitting time. 
 The existing theory on classical Markov metrics, including fundamental matrix, hitting time, hitting cost, and absorption (hitting) probability, is the result of imposing only the stopping criteria on the Markov chain (or equivalently on the random walk), which is being absorbed by the absorbing state  (or hitting the target node for the first time), and has no control or condition on the visiting states in the middle of the transition (walk).

In this paper, we introduce \textit{avoidance} Markov metrics which provide more flexibility in the design of random walk and impose new conditions on the walk to \textit{avoid} (or \textit{transit}) a specific node (or a set of nodes) before the stopping criteria. In particular, we introduce avoidance fundamental matrix, avoidance hitting time, transit hitting time, and avoidance hitting cost and 
establish that they can be computed from the  fundamental matrices associated with the appropriately defined random walk transition
probability matrices. We also provide useful relations and lemmas for computing the avoidance metrics. Thereafter, we show that these new metrics can provide a powerful tool for pivotality ranking of the nodes in network analysis applications. 

\textbf{Pivotality Ranking:} Each rechability in a network has some additional information associated with it, such as how long or how many possible ways are connecting the source node to target node, which are essential for application such as packet routing, flow scheduling, load balancing, and power management. We propose a pivotality metric to capture how pivotal a role that a third node (or a subset of nodes)
may play in the reachability from a source node to a target node in a given
network by quantifying how many (and how long) paths from source to target go through that third node, and how many do not. We present a few network examples for which other pivotality-type of metrics, in contrast to ours, fail to rank the nodes according to their pivotal roles for a chosen reachability. We also present two real-world networks to show how well our metric can capture the correct and intuitive pivotality ranking of the nodes.

The rest of this paper is organized as follows. A preliminary on classical Markov metrics is presented in Section (\ref{sec:prelim}). We introduce the avoidance Markov metrics, including avoidance fundamental matrix, avoidance hitting time, transit hitting time, and avoidance hitting cost in Section (\ref{sec:avoid}). Next, we provide useful relations, lemmas, and theorems on avoidance Markov metrics which would be insightful for future studies as well.
At the end, we present an application of avoidance Markov metrics in pivotality ranking of nodes in reachability problems in Section (\ref{sec:pivotal}). 
 
\section{Classical Markov Metrics} \label{sec:prelim}

Let $G={(X_k)}_{k>0}$ be a discrete-time Markov chain with transition matrix $P$. A Markov chain is called absorbing if it has at least one absorbing state that, once entered, cannot be left. The other states of an absorbing chain, that are not traps, are called non-absorbing or transient states. In an absorbing Markov chain, from each transient state at least one absorbing state should be reachable. Assuming that states are ordered in the way that set of transient
states $\mathcal{T}$ come first and set of absorbing states $\mathcal{A}$ come last, the transition matrix for an absorbing Markov chain takes the following block matrix form:
\begin{equation}
\label{eq:P}
P=\left[
\begin{array}{ c c }
P_{\mathcal{T}\mathcal{T}} & P_{\mathcal{T}\mathcal{A}} \\
0 & I_{\mathcal{A}\mathcal{A}}
\end{array} \right],
\end{equation}
where $I_{\mathcal{A}\mathcal{A}}$ is an $|\mathcal{A}|\times|\mathcal{A}|$ identity matrix and $P$ is row-stochastic.

Let indicator function $1_{\{X_k=m\}}$ be a random variable equal
to 1 if $X_k=m$ and 0 otherwise. The number of visits $\nu_m$ to
$m$ is written in terms of indicator functions as $\nu_m=\sum_{k=0}^{\infty} 1_{\{X_k=m\}}$. The $(s,m)$ entry of fundamental matrix $F^{\mathcal{A}}$ represents the expected number of visits to $m$ when the chain starts at $s$ and before absorption by ${\mathcal{A}}$ \cite{norris1998markov}:

\begin{equation}
F^{\mathcal{A}}_{sm}=\mathbb{E}_s(\nu_m)
\end{equation}

In the matrix form, fundamental matrix is computed as follows \cite{golnari2017random}: $F^{\mathcal{A}}=(I-P_{\mathcal{T}\mathcal{T}})^{-1}$

The hitting time, also known as the absorption time, is a random variable $\kappa_{\mathcal{A}}:\Omega\rightarrow\{0,1,2,...\}\cup\{\infty\}$
given by
$\kappa_{\mathcal{A}}=\inf{\{\kappa\geq 0: X_{\kappa}\in\mathcal{A}\}}$,
where we agree that the infimum of the empty set $\emptyset$ is $\infty$ \cite{norris1998markov}. The hitting time $\kappa_{\mathcal{A}}$ represents the number of steps that the walk takes until it hits $\mathcal{A}$ for the first time, and its expected value when the walk starts at $s$ is denoted by \cite{norris1998markov}: 
\begin{equation} \label{eq:H_stoc}
H_s^{\mathcal{A}}=\mathbb{E}_s[\kappa_\mathcal{A}]
\end{equation}
(Expected) hitting time can be computed from the fundamental matrix \cite{golnari2017random}:
$H_s^{\mathcal{A}}=\sum_m F_{sm}^{\mathcal{A}}$
  
As a generalization to hitting time, hitting cost accounts for the cost of the transitions as well. The hitting cost is a random variable $\eta_{\mathcal{A}}:\Omega\rightarrow \mathcal{C}$ given by
$\eta_{\mathcal{A}}=\inf{\{\eta\geq 0: \exists k, X_k\in\mathcal{A}, \sum_{i=1}^k w_{X_{i-1}X_i}=\eta\}}$, where  $\mathcal{C}$ is a countable set, $w_{ij}$ is the cost of edge $e_{ij}$, and we agree that the infimum of the empty set $\emptyset$ is $\infty$ \cite{golnari2017random}. The hitting cost $\eta_{\mathcal{A}}$ represents the total cost of the transitions that the chain takes until it gets absorbed by $\mathcal{A}$, and its expected value when the chain starts at $s$ is denoted by: 
\begin{equation}
U_s^{\mathcal{A}}=\mathbb{E}_s[\eta_{\mathcal{A}}]
\end{equation}
(Expected) hitting cost was first introduced by Fouss et al. \cite{fouss2007random}, presented in a recursive form: $U_s^{\mathcal{A}}=r_s + \sum_{m\in \mathcal{N}_{out}(s)}P_{sm}U_m^{\mathcal{A}}$, where $r_s$ is the expected out-going cost $r_s=\sum_i p_{si}w_{si}$. Later, Golnari et al. \cite{golnari2017random} provided a closed-form formulation to compute the (expected) hitting cost from the fundamental matrix: $U_s^{\mathcal{A}}=\sum_m F_{sm}^{\mathcal{A}}r_m$.

The absorption probability represents the probability that the chain ends up with each of the absorbing states and is denoted by $Q$ which is a $|\mathcal{T}|\times|\mathcal{A}|$ matrix \cite{snell}:
\begin{equation} \label{eq:Q}
Q^{\mathcal{A}}=FP_{\mathcal{T}\mathcal{A}}
\end{equation}
The $(s,t)$-th entry of $Q$
is the probability of absorption by absorbing state $t$ when the chain starts from transient state $s$. 
  We denote this entry by $Q_s^{\{t,\overline{\Avoid}\}}$, where $\Avoid=\mathcal{A}\setminus\{t\}$, to be more clear about the absorbing state which is hit ($t$) and the ones that are not hit ($\Avoid$). Note that $\sum_{i\in\mathcal{A}} Q_s^{\{i,\overline{\mathcal{A}\setminus\{i\}}\}}=1$, since starting from any state $s$, the chain will end up being absorbed by one of the absorbing states eventually.
To learn more about the classical Markov metrics, please refer to \cite{golnari2017random}.

\subsection{Connection to Networks}
Consider a weighted and directed network denoted by $G=(V,E,A)$, where $V$ is the set of nodes, $E$ is the set of edges, and $A$ is the adjacency matrix whose $a_{ij}$ entry indicates the distance from $i$ to $j$ if edge $e_{ij} \in E$, otherwise $a_{ij}=0$. A random walk over $G$ is modeled by a Markov chain, where the nodes of $G$ represent the states of the Markov chain and the Markov chain is fully described by its transition probability matrix: $P = D^{-1}A$, where $D=diag[d_i]$ is the diagonal matrix of $d_i$'s, and $d_i=\sum_{j}a_{ij}$ is referred to the (out-)degree of node $i$. In addition, the target nodes in $G$ can be represented as absorbing states in the Markov chain as once being hit, the random walk stops walking around. Throughout the paper, the words ``node" and ``state", and ``random walk over a network" and ``Markov chain" are used interchangeably.

\section{Avoidance Markov Metrics} \label{sec:avoid}
 
In this section, we introduce four advanced Markov metrics with modified properties and conditions. 

\begin{definition}[Avoidance fundamental matrix]
	The avoidance fundamental matrix is the conditional expected number of visits to $m$ while avoiding $\avoid$, when the Markov chain starts from $s$ and before absorption by $t$:
	\begin{equation}
	F_{sm}^{\{t,\overline{\avoid}\}}=\mathbb{E}_s(\nu_m|X_{i\leq \kappa_t}\neq \avoid),
	\end{equation}
	where $\nu_m$ represents the number of visits to
	$m$. 
\end{definition}
	
\begin{theorem}
	The avoidance fundamental matrix can be computed from the classical Markov metrics:
	\begin{equation} \label{eq:avoidanceF}
	F_{s,m}^{\{t,\overline{\avoid}\}}=F_{s,m}^{\{t,\avoid\}}.\frac{Q_m^{\{t,\overline{\avoid}\}}}{Q_s^{\{t,\overline{\avoid}\}}}
	\end{equation}
\end{theorem}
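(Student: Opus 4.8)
The plan is to read the left-hand side directly as a conditional expectation and then convert it into a ratio of two \emph{unconditional} quantities that the classical theory of Section~\ref{sec:prelim} already supplies. Write $A$ for the conditioning event $\{X_{i\le\kappa_t}\neq\avoid\}$, i.e.\ the event that the walk reaches $t$ before ever visiting $\avoid$. Viewed in the chain in which both $t$ and $\avoid$ are made absorbing, $A$ is exactly the event of being absorbed at $t$ rather than at $\avoid$, so $\mathbb{P}_s(A)=Q_s^{\{t,\overline{\avoid}\}}$. By the definition of conditional expectation,
\[
F_{sm}^{\{t,\overline{\avoid}\}}=\mathbb{E}_s(\nu_m\mid A)=\frac{\mathbb{E}_s(\nu_m\,1_A)}{\mathbb{P}_s(A)}=\frac{\mathbb{E}_s(\nu_m\,1_A)}{Q_s^{\{t,\overline{\avoid}\}}},
\]
so everything reduces to evaluating the numerator.

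The second step is to expand $\nu_m=\sum_{k\ge0}1_{\{X_k=m\}}$ and push the expectation inside the sum (legitimate since all summands are nonnegative), giving $\mathbb{E}_s(\nu_m\,1_A)=\sum_{k}\mathbb{P}_s(X_k=m,\,A)$. Here I would invoke the Markov property: conditioned on $X_k=m$, the event $A$ of eventual absorption at $t$ is determined solely by the trajectory after time $k$, so $\mathbb{P}_s(A\mid X_k=m)=\mathbb{P}_m(A)=Q_m^{\{t,\overline{\avoid}\}}$, a quantity independent of $k$. Hence $\mathbb{P}_s(X_k=m,A)=\mathbb{P}_s(X_k=m)\,Q_m^{\{t,\overline{\avoid}\}}$, and summing over $k$ yields $\mathbb{E}_s(\nu_m\,1_A)=Q_m^{\{t,\overline{\avoid}\}}\sum_k\mathbb{P}_s(X_k=m)=Q_m^{\{t,\overline{\avoid}\}}\,F_{sm}^{\{t,\avoid\}}$, where I recognize $\sum_k\mathbb{P}_s(X_k=m)=\mathbb{E}_s(\nu_m)=F_{sm}^{\{t,\avoid\}}$ as the classical fundamental matrix of the chain with absorbing set $\{t\}\cup\avoid$. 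Substituting into the ratio produces the claimed formula.

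The delicate point, and the one I would argue most carefully, is the Markov-property factorization of $\mathbb{P}_s(X_k=m,A)$. Its validity rests on $m$ being a transient state of the augmented chain: since the walk occupies the non-absorbing state $m$ at time $k$, no absorption can have occurred by time $k$, so whether the walk terminates at $t$ or at $\avoid$ is genuinely governed by the post-$k$ future and is conditionally independent of how $m$ was reached. Once this factorization is justified, the remainder is bookkeeping, with the identifications $\mathbb{P}_s(A)=Q_s^{\{t,\overline{\avoid}\}}$ and $\mathbb{E}_s(\nu_m)=F_{sm}^{\{t,\avoid\}}$ following immediately from the definitions recalled earlier.
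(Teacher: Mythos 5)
Your proof is correct and follows essentially the same route as the paper's: both expand $\nu_m$ as a sum of indicators, use the Markov property at each time $k$ to factor the joint probability into (reach $m$ at time $k$ without absorption) times (from $m$, absorb at $t$ rather than $\avoid$), and then identify the two factors with $F_{sm}^{\{t,\avoid\}}$ and $Q_m^{\{t,\overline{\avoid}\}}$. The only difference is presentational: the paper carries out the identification via explicit matrix-power series $\sum_k [P_{\mathcal{TT}}^{k}]_{sm}$ and $\sum_k [P_{\mathcal{TT}}^{k-1}P_{\mathcal{TA}}]_{it}$, whereas you read off the same quantities directly from the definitions of the fundamental matrix and absorption probability in the augmented chain.
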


\begin{definition}[Avoidance hitting time]
	Avoidance (expected) hitting time from $s$ to $t$ avoiding node $\avoid$ is the conditional expectation over the number of steps required to hit $t$ for the first time when starting from $s$ and conditioned on avoiding $\avoid$ on the way:
	\begin{equation}
	H_s^{\{t,\overline{\avoid}\}}=\mathbb{E}_s[\kappa_t|X_{i\leq \kappa_t}\neq \avoid],
	\end{equation}
	where $\kappa_{t}$ is a random variable which represents the number of transitions until being absorbed by $t$.
\end{definition}
\begin{theorem}
	The avoidance hitting time can be computed from the classical Markov metrics:
	\begin{equation} \label{eq:avoidanceH}
	H_{s}^{\{t,\overline{\avoid}\}}=\sum_m F_{s,m}^{\{t,\avoid\}}.\frac{Q_m^{\{t,\overline{\avoid}\}}}{Q_s^{\{t,\overline{\avoid}\}}}
	\end{equation}
\end{theorem}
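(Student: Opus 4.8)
The plan is to reduce the claim to the classical identity $H_s^{\mathcal{A}}=\sum_m F_{sm}^{\mathcal{A}}$ recalled in Section~\ref{sec:prelim}, now applied to the conditional (avoidance) walk, and then to substitute the avoidance fundamental matrix formula~\eqref{eq:avoidanceF} established in the preceding theorem. In other words, I expect the entire content of~\eqref{eq:avoidanceH} to follow once we establish the conditional analogue $H_s^{\{t,\overline{\avoid}\}}=\sum_m F_{s,m}^{\{t,\overline{\avoid}\}}$ and plug in the already-proved closed form for each summand.

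First I would write the hitting time as an accumulation of visit counts. On any path, every step taken before absorption leaves the walk at some transient state, so the number of steps needed to hit $t$ equals the total number of visits to transient states: $\kappa_t=\sum_m \nu_m$, where $m$ ranges over the transient states of the chain in which both $t$ and $\avoid$ are made absorbing. This is a deterministic, pathwise identity, so it continues to hold verbatim on the avoidance event $\{X_{i\leq\kappa_t}\neq\avoid\}$. Taking the conditional expectation $\mathbb{E}_s[\,\cdot\mid X_{i\leq\kappa_t}\neq\avoid]$ of both sides and interchanging it with the sum (justified by the conditional monotone convergence theorem, since each $\nu_m\ge 0$) gives, by the definitions of the avoidance hitting time and the avoidance fundamental matrix,
\[
H_s^{\{t,\overline{\avoid}\}}=\mathbb{E}_s\Big[\textstyle\sum_m \nu_m \,\Big|\, X_{i\leq\kappa_t}\neq\avoid\Big]=\sum_m \mathbb{E}_s[\nu_m\mid X_{i\leq\kappa_t}\neq\avoid]=\sum_m F_{s,m}^{\{t,\overline{\avoid}\}}.
\]

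Finally I would substitute the closed form~\eqref{eq:avoidanceF} for $F_{s,m}^{\{t,\overline{\avoid}\}}$ into each term, which reproduces exactly the claimed expression~\eqref{eq:avoidanceH}. I expect the main obstacle to be the careful justification of the middle identity $H_s^{\{t,\overline{\avoid}\}}=\sum_m F_{s,m}^{\{t,\overline{\avoid}\}}$: one must confirm that conditioning on the avoidance event does not disturb the pathwise decomposition $\kappa_t=\sum_m\nu_m$ (it does not, because that decomposition is deterministic on each trajectory and the avoidance event is determined by the trajectory up to absorption), that the contribution of $\avoid$ vanishes under the conditioning so the summation index set is consistent with that of~\eqref{eq:avoidanceF}, and that the exchange of conditional expectation with the infinite sum is legitimate. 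Everything after this is a direct substitution of the previously proved formula, so no further structural argument is needed.
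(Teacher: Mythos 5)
Your proof is correct and non-circular---the paper's proof of Eq.~(\ref{eq:avoidanceF}) in Appendix~(\ref{apx:avoidFundMat}) nowhere uses the hitting-time theorem, so you are free to invoke it---but it takes a genuinely different route from the paper's own argument. The paper (Appendix~\ref{apx:avoidHitTime}) works directly from the definition: it expands $\mathbb{E}_s[\kappa_t\mid X_{i\leq\kappa_t}\neq\avoid]$ as a ratio of two series, identifies the generic term with $[P_{\mathcal{TT}}^{k-1}P_{\mathcal{TA}}]_{st}$, and sums both series in closed form, using $\sum_k k P_{\mathcal{TT}}^{k-1}=(I-P_{\mathcal{TT}})^{-2}=F^2$ for the numerator (Eq.~(\ref{eq:kPP})) and $\sum_k P_{\mathcal{TT}}^{k-1}P_{\mathcal{TA}}=FP_{\mathcal{TA}}=Q$ for the denominator (Eq.~(\ref{eq:PP})); the claimed formula then drops out of $F^2P_{\mathcal{TA}}=FQ$. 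You instead prove the conditional row-sum identity $H_s^{\{t,\overline{\avoid}\}}=\sum_m F_{s,m}^{\{t,\overline{\avoid}\}}$ pathwise, via $\kappa_t=\sum_{m\in\mathcal{T}}\nu_m$ on the avoidance event, and then substitute Eq.~(\ref{eq:avoidanceF}). Your route is more elementary (no arithmetic--geometric matrix series) and more modular; it also upgrades Relation~\ref{re:HF1} from an after-the-fact comparison of the two theorems (which is how the paper ``proves'' it) to an independently established probabilistic fact, which is logically cleaner. The paper's route, in exchange, is self-contained---it never leans on the fundamental-matrix theorem---and is carried out for a general avoided set $\mathcal{O}$ rather than a single node. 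Two points you should state explicitly when writing this up: on the avoidance event one has $\kappa_t<\infty$ almost surely (absorption is a.s.\ in a finite absorbing chain), so the pathwise identity is a.s.\ valid there; and since the state space is finite, interchanging the conditional expectation with $\sum_m$ is mere linearity---your appeal to conditional monotone convergence is only needed if one allows countably many states.
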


To find the more general form of (\ref{eq:avoidanceF}) and (\ref{eq:avoidanceH}) for a \textit{set} of avoiding nodes and the corresponding proofs, please refer to (\ref{apx:avoidFundMat}) and (\ref{apx:avoidHitTime}) respectively.

\begin{definition}[Avoidance hitting cost]
	Avoidance (expected) hitting cost from $s$ to $t$ avoiding node $\avoid$ is the conditional expectation over the cost of steps required to hit $t$ for the first time when starting from $s$ and conditioned on avoiding $o$ on the way:
	\begin{equation}
	U_s^{\{t,\overline{\avoid}\}}=\mathbb{E}_s[\eta_t|X_k=t, X_{i\leq k}\neq \avoid]
	\end{equation}
	where $\eta_{t}$ is a random variable which represents the cost of transitions until being absorbed by $t$.

\end{definition}
The more general form of avoidance hitting cost is presented in (\ref{apx:avoidHitCost}). 
\begin{theorem}
	The avoidance hitting cost can be computed from classical Markov metrics:
	\begin{equation} \label{eq:avoidanceU}
	U_{s}^{\{t,\overline{\avoid}\}}=\sum_m (F_{s,m}^{\{t,\avoid\}}.\frac{ Q_m^{\{t,\overline{\avoid}\}}}{Q_s^{\{t,\overline{\avoid}\}}})r_m^{\{t,\overline{\avoid}\}},
	\end{equation}
	where $r_m^{\{t,\overline{\avoid}\}}=\sum_i p_{mi}w_{mi}\frac{Q_i^{\{t,\overline{\avoid}\}}}{Q_m^{\{t,\overline{\avoid}\}}} $.
\end{theorem}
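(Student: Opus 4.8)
The plan is to mirror the classical closed form $U_s^{\mathcal{A}}=\sum_m F_{sm}^{\mathcal{A}}r_m$, replacing every ingredient by its conditional (avoidance) counterpart and then invoking the already-established avoidance fundamental matrix relation (\ref{eq:avoidanceF}). The central observation is that conditioning the walk on the event of hitting $t$ while avoiding $\avoid$ produces a new, well-defined absorbing Markov chain on the same transient states, absorbed at $t$, whose one-step transition probabilities are the reweighted $\tilde{p}_{mi}=p_{mi}\,Q_i^{\{t,\overline{\avoid}\}}/Q_m^{\{t,\overline{\avoid}\}}$. First I would verify that these $\tilde{p}_{mi}$ indeed form a row-stochastic matrix, using the harmonic identity $Q_m^{\{t,\overline{\avoid}\}}=\sum_i p_{mi}Q_i^{\{t,\overline{\avoid}\}}$ satisfied by absorption probabilities (with $Q_t^{\{t,\overline{\avoid}\}}=1$ and $Q_i^{\{t,\overline{\avoid}\}}=0$ for $i\in\avoid$), which immediately gives $\sum_i \tilde{p}_{mi}=1$.

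Next I would argue that the fundamental matrix of this transformed chain is exactly the avoidance fundamental matrix $F^{\{t,\overline{\avoid}\}}$ of (\ref{eq:avoidanceF}). This is essentially definitional, since $F_{sm}^{\{t,\overline{\avoid}\}}$ is the conditional expected number of visits to $m$ given the avoidance event, and it has effectively already been carried out in establishing (\ref{eq:avoidanceF}). Applying the classical closed form to the transformed data then yields $U_s^{\{t,\overline{\avoid}\}}=\sum_m F_{sm}^{\{t,\overline{\avoid}\}}\,\tilde{r}_m$, where $\tilde{r}_m=\sum_i \tilde{p}_{mi}w_{mi}$ is the expected cost of a single outgoing step from $m$ under the conditional transition law. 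Substituting $\tilde{p}_{mi}=p_{mi}\,Q_i^{\{t,\overline{\avoid}\}}/Q_m^{\{t,\overline{\avoid}\}}$ identifies $\tilde{r}_m$ with the stated $r_m^{\{t,\overline{\avoid}\}}=\sum_i p_{mi}w_{mi}\,Q_i^{\{t,\overline{\avoid}\}}/Q_m^{\{t,\overline{\avoid}\}}$.

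Finally I would substitute the avoidance fundamental matrix relation (\ref{eq:avoidanceF}), namely $F_{sm}^{\{t,\overline{\avoid}\}}=F_{sm}^{\{t,\avoid\}}\,Q_m^{\{t,\overline{\avoid}\}}/Q_s^{\{t,\overline{\avoid}\}}$, into $U_s^{\{t,\overline{\avoid}\}}=\sum_m F_{sm}^{\{t,\overline{\avoid}\}}\,r_m^{\{t,\overline{\avoid}\}}$, arriving at (\ref{eq:avoidanceU}) expressed purely in terms of classical Markov metrics. This last step is bookkeeping and parallels exactly the passage from the conditional form of the avoidance hitting time to (\ref{eq:avoidanceH}).

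The main obstacle I anticipate lies in the justification that the conditional measure genuinely yields a time-homogeneous Markov chain with the claimed kernel $\tilde{p}_{mi}$, and that the per-visit cost decomposition — writing the total conditional expected cost as a sum over states of ``expected number of visits'' times ``expected one-step outgoing cost'' — remains valid once we condition. The delicate point is that the conditional expected one-step cost out of $m$ must be reweighted by the success probabilities $Q_i^{\{t,\overline{\avoid}\}}$ rather than taken against the bare $p_{mi}$; I would therefore take care to confirm that the weighting appearing in $r_m^{\{t,\overline{\avoid}\}}$ is precisely the conditional (Doob-transformed) transition law. Once the transformed chain is in hand, the remainder reduces to reusing the classical results already recorded in Section~\ref{sec:prelim}.
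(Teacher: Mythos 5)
Your proposal is correct, and it takes a genuinely different route from the paper's own proof. The paper's argument in (\ref{apx:avoidHitCost}) works directly on the conditional expectation: it writes $U_s^{\{t,\overline{\avoid}\}}$ as a ratio of sums over avoiding walks, splits the total cost of each walk into its edge costs, and interchanges the order of summation so that each edge $e_{xy}$ contributes $P_{xy}w_{xy}$ times the aggregate probability of avoiding prefixes $s \to x$ times that of avoiding suffixes $y \to t$; recognizing these aggregates as $F_{sx}^{\{t,\avoid\}}$ and $Q_y^{\{t,\overline{\avoid}\}}$ and regrouping the edge sum around the tail node $x$ produces the factor $r_x^{\{t,\overline{\avoid}\}}$. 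You instead construct the Doob $h$-transformed chain with kernel $\tilde{p}_{mi}=p_{mi}Q_i^{\{t,\overline{\avoid}\}}/Q_m^{\{t,\overline{\avoid}\}}$ (row-stochastic by the harmonic identity), identify its fundamental matrix with the avoidance fundamental matrix of (\ref{eq:avoidanceF}), and then simply re-apply the classical closed form $U_s^{\mathcal{A}}=\sum_m F_{sm}^{\mathcal{A}}r_m$ recorded in Section~\ref{sec:prelim}. Both arguments are sound. Your route is more conceptual: it explains where the reweighted one-step cost $r_m^{\{t,\overline{\avoid}\}}$ comes from (it is exactly the expected outgoing cost under the conditioned kernel), it recycles classical machinery instead of redoing the walk combinatorics, and it extends verbatim to a set $\mathcal{O}$ of avoided nodes, matching the generality of the appendix. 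What it demands in exchange is an actual proof of the two points you flag as delicate: that the conditioned process is a time-homogeneous Markov chain with kernel $\tilde{p}$ (one-step analysis plus the Markov property, since the conditioning event depends only on the future given the current state), and that the conditional expected cost equals the expected cost under the transformed chain (immediate, because the total cost is a path functional and the transformed law is precisely the conditional law). You should also record the boundary convention that only states with $Q_m^{\{t,\overline{\avoid}\}}>0$ are retained; this loses nothing, since $Q_m^{\{t,\overline{\avoid}\}}=0$ forces $Q_i^{\{t,\overline{\avoid}\}}=0$ for every out-neighbor $i$, so such terms vanish from the sum in (\ref{eq:avoidanceU}). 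The paper's walk-sum computation is more elementary and self-contained, at the price of heavier bookkeeping.
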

Please find the proof for the more general form of avoidance hitting cost in (\ref{apx:avoidHitCost}).

Closely related to avoidance hitting time, we define the notion of transit hitting time: 
\begin{definition}[Transit hitting time]
	For any third node $\avoid$, the transit hitting time $H_{s}^{\{t,\breve{\avoid}\}}$
	is
	the expected number of transitions which starts
	at $s$ and always traverses node $\avoid$ before being absorbed by $t$, obtained from:
	\begin{equation} \label{eq:transitH}
	H_{s}^{\{t,\breve{\avoid}\}}=H_{s}^{\{\avoid,\overline{t}\}}+H_\avoid^{\{t\}}
	\end{equation}
\end{definition}

We also show that the classical hitting time can be decomposed into two terms of avoidance hitting time and transit hitting time,  with respect to a third node. For more details please refer to Theorem~\ref{thm:hittime_decompose}.

\subsection{Useful Relations} 
In the following, we present further relations and insights for advanced Markov metrics to shed light on their usefulness for Markov analysis.  

\begin{relation} \label{re:HF1}
	Similar to classical counterpart, avoidance hitting time is obtained by row-sum of avoidance fundamental matrix:
	\begin{equation}
	H_{s}^{\{t,\overline{\avoid}\}}=\sum_m F_{s,m}^{\{t,\overline{\avoid}\}}
	\end{equation}
\end{relation}
\begin{proof}
	Compare (\ref{eq:avoidanceF}) and (\ref{eq:avoidanceH}).
\end{proof}

\begin{relation}
	Similar to classical counterpart, avoidance hitting cost is obtained by weighted row-sum of avoidance fundamental matrix:
	\begin{equation}
	U_{s}^{\{t,\overline{\avoid}\}}=\sum_m F_{s,m}^{\{t,\overline{\avoid}\}}r_m^{\{t,\overline{\avoid}\}},
	\end{equation}
	where $r_m^{\{t,\overline{\avoid}\}}=\sum_i p_{mi}w_{mi}\frac{Q_i^{\{t,\overline{\avoid}\}}}{Q_m^{\{t,\overline{\avoid}\}}} $.
\end{relation}
\begin{proof}
	Compare (\ref{eq:avoidanceF}) and (\ref{eq:avoidanceU}).
\end{proof}

\begin{relation} \label{lem:H2target_decomposition}
	Decomposing the classical hitting time for two target nodes into individual target avoidance components yields:
	\begin{equation} H_s^{\{t,\avoid\}}=Q_s^{\{t,\overline{\avoid}\}}H_s^{\{t,\overline{\avoid}\}}+Q_s^{\{\avoid,\overline{t}\}}H_s^{\{\avoid,\overline{t}\}}
	\end{equation}
\end{relation}
\begin{proof}
	Use \ref{eq:avoidanceH} and the fact that $Q_s^{\{t,\overline{\avoid}\}}+Q_s^{\{\avoid,\overline{t}\}}=1$
\end{proof}

\begin{theorem}[Hitting time decomposition into transit and avoidance components] \label{thm:hittime_decompose}
	The  hitting time from node $s$ to node $t$ can be decomposed into an ``avoidance" hitting time component and a ``transit" hitting time component with respect to any node $\avoid$ as follows:
	\begin{equation} 
	H_s^{\{t\}}=Q_s^{\{t,\overline{\avoid}\}}H_s^{\{t,\overline{\avoid}\}}+Q_s^{\{\avoid,\overline{t}\}}H_s^{\{t,\check{\avoid}\}}. 
	\end{equation}
\end{theorem}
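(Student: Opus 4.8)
The plan is to apply the law of total expectation, partitioning the trajectories according to whether the walk first reaches $\avoid$ or first reaches $t$. I would begin by passing to the auxiliary absorbing chain in which both $t$ and $\avoid$ are made absorbing states; in this chain the two events $E_1=\{\text{reach } t \text{ before } \avoid\}$ and $E_2=\{\text{reach } \avoid \text{ before } t\}$ are mutually exclusive and exhaustive, with probabilities $\mathbb{P}_s(E_1)=Q_s^{\{t,\overline{\avoid}\}}$ and $\mathbb{P}_s(E_2)=Q_s^{\{\avoid,\overline{t}\}}$ that sum to one by the absorption-probability identity noted after (\ref{eq:Q}). Conditioning the classical hitting-time variable $\kappa_t$ on this partition yields
\begin{equation}
H_s^{\{t\}}=\mathbb{E}_s[\kappa_t]=Q_s^{\{t,\overline{\avoid}\}}\,\mathbb{E}_s[\kappa_t\mid E_1]+Q_s^{\{\avoid,\overline{t}\}}\,\mathbb{E}_s[\kappa_t\mid E_2],
\end{equation}
so it remains to identify the two conditional expectations.

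The first is immediate: on $E_1$ the walk hits $t$ without ever visiting $\avoid$, which is precisely the conditioning event defining the avoidance hitting time, so $\mathbb{E}_s[\kappa_t\mid E_1]=H_s^{\{t,\overline{\avoid}\}}$. For the second, I would invoke the strong Markov property at the stopping time $\kappa_\avoid$: on $E_2$ the walk first reaches $\avoid$ (without having hit $t$) and then proceeds from $\avoid$ to $t$, giving the almost-sure split $\kappa_t=\kappa_\avoid+\kappa_t'$, where $\kappa_t'$ is the residual time to reach $t$ started afresh at $\avoid$. Taking conditional expectations and using that $E_2$ depends only on the trajectory up to $\kappa_\avoid$ yields $\mathbb{E}_s[\kappa_t\mid E_2]=H_s^{\{\avoid,\overline{t}\}}+H_\avoid^{\{t\}}$, which by the definition of the transit hitting time (\ref{eq:transitH}) equals $H_s^{\{t,\check{\avoid}\}}$. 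Substituting both identities into the total-expectation formula gives the claimed decomposition.

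The main obstacle is the rigorous justification of the second conditional expectation, i.e.\ the additive split together with the claims that $\mathbb{E}_s[\kappa_\avoid\mid E_2]=H_s^{\{\avoid,\overline{t}\}}$ and that the residual term contributes exactly $H_\avoid^{\{t\}}$. This rests on the strong Markov property: conditioned on reaching $\avoid$ before $t$, the segment before $\kappa_\avoid$ is distributed as a walk from $s$ to $\avoid$ that avoids $t$ (hence the avoidance hitting time $H_s^{\{\avoid,\overline{t}\}}$), while the segment after $\kappa_\avoid$ restarts independently at $\avoid$ and is therefore governed by the unconditioned hitting time $H_\avoid^{\{t\}}$. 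Care is needed to verify that conditioning on $E_2$ does not distort the pre-$\kappa_\avoid$ segment beyond the stated avoidance condition and that the two segments contribute additively in expectation. As a consistency check, subtracting Relation~\ref{lem:H2target_decomposition} from the result isolates the extra term $Q_s^{\{\avoid,\overline{t}\}}H_\avoid^{\{t\}}$, matching the intuition that hitting $t$ costs exactly this much more than hitting the set $\{t,\avoid\}$ precisely on the trajectories that encounter $\avoid$ first.
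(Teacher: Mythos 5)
Your proof is correct, but it takes a genuinely different route from the paper's. The paper argues entirely within its linear-algebraic framework: it sums Lemma~\ref{Nlemma1} (incremental computation of the fundamental matrix) over $m$ and applies Lemma~\ref{Nlemma3} to obtain the identity $H_i^{\{j,k\}}=H_i^{\{j\}}-Q_i^{\{k,\overline{j}\}}H_k^{\{j\}}$, and then substitutes Relation~\ref{lem:H2target_decomposition} (which itself rests on the matrix formula (\ref{eq:avoidanceH}) for the avoidance hitting time) to isolate the transit term. You instead work at the trajectory level: the law of total expectation over the partition $\{E_1,E_2\}$, the observation that conditioning on $E_1$ (resp.\ on $E_2$, up to time $\kappa_\avoid$) is exactly the conditioning event in the paper's definition of the avoidance hitting time, and the strong Markov property at $\kappa_\avoid$ --- justified correctly, since $E_2$ is $\mathcal{F}_{\kappa_\avoid}$-measurable, so conditioning on it does not distort the post-$\kappa_\avoid$ segment. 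Your argument is more elementary and more illuminating: it explains \emph{why} the transit hitting time $H_s^{\{\avoid,\overline{t}\}}+H_\avoid^{\{t\}}$ is the right second component (it is literally the conditional expectation of $\kappa_t$ over trajectories that pass through $\avoid$ first), and it does not presuppose the computational formula (\ref{eq:avoidanceH}). The paper's route, by contrast, reuses machinery it has already established, stays free of measure-theoretic bookkeeping, and yields the useful intermediate identity $H_i^{\{j,k\}}=H_i^{\{j\}}-Q_i^{\{k,\overline{j}\}}H_k^{\{j\}}$ as a byproduct --- which is precisely the quantity your consistency check at the end recovers. One shared caveat: both arguments implicitly assume the non-degenerate case $Q_s^{\{t,\overline{\avoid}\}},\,Q_s^{\{\avoid,\overline{t}\}}>0$ (and almost-sure absorption by $\{t,\avoid\}$), since otherwise a conditional expectation is undefined or infinite, as in the paper's ``superfluous node'' examples; this is a zero-times-infinity boundary case rather than a gap in your reasoning.
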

\begin{proof}
		Take sum over $m$ for both sides of Lemma (\ref{Nlemma1}) and use Lemma (\ref{Nlemma3}) to obtain the following equation:
	\begin{equation} \label{eq:Hjk_1}
	H_{i}^{\{j,k\}}=H_i^{\{j\}}-Q_{i}^{\{k,\overline{j}\}}H_k^{\{j\}}
	\end{equation}

	Substituting Relation (\ref{lem:H2target_decomposition}) in Eq. (\ref{eq:Hjk_1}) yields the following relation:
	\begin{equation}
	H_i^{\{j\}}=Q_i^{\{j,\overline{k}\}}H_i^{\{j,\overline{k}\}}+Q_i^{\{k,\overline{j}\}}(H_i^{\{k,\overline{j}\}}+H_k^{\{j\}})=Q_i^{\{j,\overline{k}\}}H_i^{\{j,\overline{k}\}}+Q_i^{\{k,\overline{j}\}}H_i^{\{j,\check{k}\}}, \nonumber
	\end{equation} 
	where $H_{i}^{\{j,\check{k}\}}=H_{i}^{\{k,\overline{j}\}}+H_k^{\{j\}}$. 
	
\end{proof}

\begin{relation} \label{re:Fo}
	Avoidance fundamental matrix $F_{sm}^{\{t,\overline{\avoid}\}}$ can be written in terms of classical fundamental matrix $F^{\{\avoid\}}$ where the avoiding node $\avoid$ is the only absorbing state: 
	\begin{equation} \label{eq:Favoid_Fo} F_{sm}^{\{t,\overline{\avoid}\}}=F_{mt}^{\{\avoid\}}(\frac{F_{sm}^{\{\avoid\}}}{F_{st}^{\{\avoid\}}}-\frac{F_{tm}^{\{\avoid\}}}{F_{tt}^{\{\avoid\}}})
	\end{equation}
\end{relation}
\begin{proof}
	Apply Lemmas (\ref{Nlemma1}) and (\ref{Nlemma3}) in Eq. (\ref{eq:avoidanceF}).
\end{proof}

\begin{relation} \label{re:Ft}
	Avoidance fundamental matrix $F_{sm}^{\{t,\overline{\avoid}\}}$ can be written in terms of classical fundamental matrix $F^{\{t\}}$: 
	\begin{eqnarray} F_{sm}^{\{t,\overline{\avoid}\}}&=&\frac{1}{F_{\avoid\avoid}^{\{t\}}}\frac{(F_{\avoid\avoid}^{\{t\}} F_{sm}^{\{t\}}-F_{s\avoid}^{\{t\}}F_{\avoid m}^{\{t\}})(F_{\avoid\avoid}^{\{t\}}-F_{m\avoid}^{\{t\}})}{F_{\avoid\avoid}^{\{t\}}-F_{s\avoid}^{\{t\}}} \nonumber \\
	&=&\frac{F_{\avoid\avoid}^{\{t\}}F_{sm}^{\{t\}}-F_{s\avoid}^{\{t\}}F_{\avoid m}^{\{t\}}-F_{sm}^{\{t\}}F_{m\avoid}^{\{t\}}+Q_s^{\{\avoid,\overline{t}\}}F_{\avoid m}^{\{t\}}F_{m\avoid}^{\{t\}}}{F_{\avoid\avoid}^{\{t\}}-F_{s\avoid}^{\{t\}}}
	\end{eqnarray}
\end{relation}
\begin{proof}
	Use Eq. (\ref{eq:avoidanceF}) and Lemma (\ref{Nlemma1}).
\end{proof}

\begin{relation}
	Avoidance hitting time $H_s^{\{t,\overline{\avoid}\}}$ can be written in terms of classical fundamental matrix $F^{\{\avoid\}}$:
	\begin{equation}
	H_s^{\{t,\overline{\avoid}\}}=\sum_m F_{mt}^{\{\avoid\}}(\frac{F_{sm}^{\{\avoid\}}}{F_{st}^{\{\avoid\}}}-\frac{F_{tm}^{\{\avoid\}}}{F_{tt}^{\{\avoid\}}})
	\end{equation}
\end{relation}
\begin{proof}
	Use Relations (\ref{re:Fo}) and (\ref{re:HF1}).
\end{proof}

\begin{relation}
	Avoidance hitting time $H_s^{\{t,\overline{\avoid}\}}$ can be written in terms of classical fundamental matrix $F^{\{t\}}$:
	\begin{eqnarray}
	H_s^{\{t,\overline{\avoid}\}}&=&\frac{1}{F_{\avoid\avoid}^{\{t\}}}\frac{\sum_m ((F_{\avoid\avoid}^{\{t\}}F_{sm}^{\{t\}}-F_{s\avoid}^{\{t\}}F_{\avoid m}^{\{t\}})(F_{\avoid\avoid}^{\{t\}}-F_{m\avoid}^{\{t\}}))}{F_{\avoid\avoid}^{\{t\}}-F_{s\avoid}^{\{t\}}} \nonumber \\
	&=&\frac{1}{F_{\avoid\avoid}^{\{t\}}-F_{s\avoid}^{\{t\}}}(F_{\avoid\avoid}^{\{t\}}H_s^{\{t\}}-F_{s\avoid}^{\{t\}}H_{\avoid}^{\{t\}}-\sum_m F_{sm}^{\{t\}}F_{m\avoid}^{\{t\}}+Q_s^{\{\avoid,\overline{t}\}}\sum_m F_{\avoid m}^{\{t\}}F_{m\avoid}^{\{t\}}) \nonumber
	\end{eqnarray}
\end{relation}
\begin{proof}
	Use Relations (\ref{re:Ft}) and (\ref{re:HF1}).
\end{proof}

\section{Pivotality of Nodes in Reachability Problems} \label{sec:pivotal}

Reachability is crucial in any type of complex networks, be them
communication and computer networks, power grids, transportation
networks or social networks \cite{xie2005static}\cite{khakpour2010quantifying}\cite{parandehgheibi2013robustness}. More often than not, however, it is not
sufficient simply to know that a node $s$ can reach another
node $t$ in the network. Additional information is associated with
reachability such as how long (e.g., in terms of number of
intermediate nodes to be traversed or some other measures of time or
cost)  or how many possible ways (e.g., in terms of paths) for node $s$ to reach node $t$. Such information is essential for selecting paths for packet routing or information/commodity delivery, flow
scheduling, power management, traffic control, load balancing and
so forth in communication and computer networks, power grids and
transportation networks.
In this section, we analyze another piece of important information
associated with reachability -- which we call pivotality. Pivotality
captures how pivotal a role that a third node $k$ or a subset of nodes
$S$ may play in the reachability from node $s$ to node $t$ in a given
network by quantifying how many (and how long) paths from $s$ to
$t$ go through $k$ or $S$, and how many do not. We quantify this role by
exploiting relationships between the hitting time and
transit hitting times and examine how much of detour cost $k$ or $S$ can cause.
In particular,  we propose the \textit{avoidance-transit hitting time} pivotality metric (ATH). 
Finally we use several simulated and real-world
network examples to illustrate the
advantages and utility of avoidance and transit hitting times, especially in
comparison with existing metrics proposed in the literature.

\subsection{Related Work}
Closely related to what we study in this section, Ranjan and Zhang \cite{RanjanZhang13} introduce the notion of \textit{(forced) detour cost} of a random walker from a source $s$ to a target $t$ with respect to a third node $k$, which is defined as $\Delta H_{s}^t(k):=H_{s}^{\{k\}}+H_{k}^{\{t\}}-H_{s}^{\{t\}}$. Namely, the (forced) detour cost is the additional steps incurred when a random walker starts at source node $s$ and is forced to first visit the third  node $k$, and then starts from node $k$ to reach target node $t$  vs. the number of the steps it takes starting at source node $s$ and hitting target node $t$ for the first time. Ranjan and Zhang show~\cite{RanjanZhang13} that aggregated over all pairs of sources and targets, $\sum_{s}\sum_t \Delta H_{s}^t(k) =\mathcal{L}^+_{kk}$. Here $\mathcal{L}^+_{kk}$ is the diagonal entries of  $\mathcal{L}^+$,  the Penrose-Moore pseudo-inverse of the graph Laplacian $\mathcal{L} =D-A$, where $A=[a_{ij}]$ is the adjacency matrix of a graph (network) and $D=diag[d_i]$, $d_i=\sum_{j}a_{ij}$, is the diagonal degree matrix. Based on this (forced) detour cost as well as several other interpretations of the diagonal entries $\mathcal{L}^+_{kk}$ of $\mathcal{L}^+$, Ranjan and Zhang advocate  $C^*(k):=1/\mathcal{L}^{+}_{kk}$ as a new node centrality measure -- referred to as the {\em structural} or {\em topological} {\em centrality}, and demonstrate that $C^*(k):=1/\mathcal{L}^{+}_{kk}$ indeed better captures the structural/topological roles that node $k$ plays in a network than existing centrality metrics, in particular in terms of their roles in the overall network robustness. Motivated by the results in~\cite{RanjanZhang13}, in this paper we aim to provide a more precise characterization of how pivotal a role a third node $k$ may play in the random walks from a source node $s$ to a target node $t$ by probabilistically quantifying the number of paths from source $s$ to target $t$ that circumvent node $k$ vs. those that traverse node $k$ that the random walker is likely to take. This leads us to introduce two inter-related metrics, {\em avoidance} and {\em transit} hitting times, to measure the {\em pivotality} of node $k$ in the random walks from source $s$ to target $t$.

\subsection{Pivotality Metrics and Network Examples} 

For a given node $k$ with respect to a pair of source and target nodes $s$ and $t$, we define ATH as follows:
\begin{equation}
e_{ATH}(k)=H_{s}^{\{t\}}-H_{s}^{\{t,\breve{k}\}}= H_{s}^{\{t\}}-(H_{s}^{\{k,\overline{t}\}}+H_{k}^{\{t\}}). \label{eq:ATH}
\end{equation}
Note that if all paths from node $s$ to node $t$ go through a node $k^*$, then $e_{ATH}(k^*)=0$. In this case, $k^*$ is  the most ``pivotal'' point of any path from $s$ to $t$ in that all paths rely on $k^*$. We claim that in such a case, for any other node $k$, $e_{ATH}(k) \leq 0$; due to space limitation, we will omit the proof here.  In general,  $e_{ATH}(k)$ can be either positive, indicating that paths going through node $k$ are overall shorter than an ``average'' path from node $s$ to node $t$;  or negative, indicating that paths going through node $k$ are overall longer that an ``average'' path from node $s$ to node $t$.

For comparison, we also consider other metrics proposed in the literature.  We define the \textit{shortest-path} pivotality metric (SHP) to measure the pivotality of  node $k$ using the shortest paths only: $e_{SHP}(k)=L_{s}^{t}-(L_{s}^{k}+L_{k}^{t})$. The {\em maximum flow} pivotality metric (MF), $e_{MF}(k)$,  measures the amount of the maximum flow from $s$ to $t$ that goes through node $k$ in a flow network, where the weight of edges indicate their capacity. The (classical) {\em hitting time} pivotality
metric (CH) is defined as the negative of the (forced) detour cost defined in \cite{RanjanZhang13}, 
\begin{equation}
e_{CH}(k):=-\Delta H_s^{\{t\}}(k) =H_{s}^{\{t\}}-(H_{s}^{\{k\}}+H_{k}^{\{t\}}). \label{eq:CH}
\end{equation}
Notice the similarity between $e_{ATH}(k)$ and $e_{CH}(k)$, except the terms $H_{s}^{\{k,\overline{t}\}}$ and $H_{s}^{\{k\}}$. Due to the triangle inequality of the shortest path distance and the hitting time, $e_{SHP}(k) \leq 0$ and  $e_{CH}(k) \leq 0$ whereas by definition, $e_{MF}(k) \geq 0$ for all $k$ and all pairs of source and target nodes, $s$ and $t$.
Despite these differences, in terms of ranking of nodes based on their pivotality using each metric, what matters is their relative values: as long as $e(k_1)<e(k_2)$, node $k_2$ is more ``pivotal'' than $k_1$  in terms of  reachability from $s$ to $t$.

\begin{minipage}{\textwidth}
	\hspace{-10pt}
	\begin{minipage}[b]{0.35\textwidth}
		\centering
		\includegraphics[width=40mm]{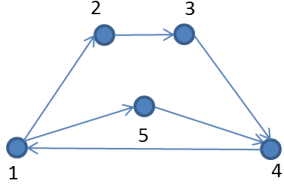}
		\captionof{figure}{Network example 1}\label{fig:toyExample1}
	\end{minipage}
	\hfill
	\begin{minipage}[b]{0.55\textwidth}
			\hspace{-10pt}
		\begin{tabular}{c|ccc}
	nodes & 2 & 3 & 5 \\
	\hline
	$e_{SHP}$ & -1 & -1 & 0 \\
	$e_{MF}$ & 0.5 & 0.5 & 0.5 \\
	$e_{CH}$ & -3.5 & -3.5 & -3.5 \\
	$e_{ATH}$ & -0.5 & -0.5 & 0.5 \\
\end{tabular}
		\captionof{table}{Pivotality metrics for \newline reachability from node 1 to node 4}\label{tb:toyexample1}
	\end{minipage}
\end{minipage}
\vspace*{10pt}

\begin{minipage}{\textwidth}
	\hspace{-15pt}
	\begin{minipage}[b]{0.35\textwidth}
		\centering
		\includegraphics[width=40mm]{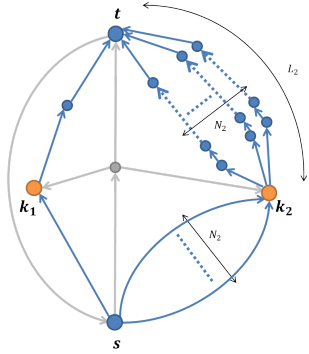}
		\captionof{figure}{Network example 2}\label{fig:toyExample2}
	\end{minipage}
	\hfill
	\begin{minipage}[b]{0.6\textwidth}
		\hspace{-15pt}
		\begin{tabular}{c|cc}
	& $e_{CH}$ & $e_{ATH}$\\
	& $k_1,k_2$ & $k_1,k_2$\\
	\hline
	$L_2=1,N_2=2$ & -7,-2.5 & -0.75,0.36 \\
	$L_2=2,N_2=1$ & -5.14,-5.14 & -0.14,-0.14 \\
	$L_2=2,N_2=2$ & -8.17,-2.92 & -0.17,-0.06 \\
	$L_2=20,N_2=2$ & -29.17,-10.42 & 10.33,-7.56 \\
	$L_2=20,N_2=1$ & -15.14,-15.14 & 7.86,-10.14 \\
	\end{tabular}
		\captionof{table}{CH and ATH pivotality metrics for\newline  various choices of $N_2$ and $L_2$}\label{tb:toyexample2}
	\end{minipage}
\end{minipage}
\vspace*{10pt}

Using several simple network examples, in this section we illustrate and compare the behavior of the pivotality metrics defined above. First consider the simple network example shown in Fig.~\ref{fig:toyExample1} where the weight of all edges is 1, i.e., $a_{ij}=1$. With node $1$ being the source and node $4$  the target, it is intuitively apparent that 
node 5 is more ``pivotal'' than node 2 or node 3, given that it is on the shorter path. The pivotaliy metrics computed using the four methods are shown in Table~\ref{tb:toyexample1}. We say that both the MF and CH metrics fail to rank the nodes correctly in that they are not able to recognize the higher pivotality of node 5 over nodes 2 and 3.

Figure~\ref{fig:toyExample2} provides a more general network example which can help illustrate the different behaviors of the pivotality metrics under study. 
In this network, there exists a shortest path of length 2 from source $s$ to target $t$ (gray-colored path) interconnected to two groups of (blue-colored) paths passing through $k_1$ and $k_2$: a three-hop path from source $s$ via node $k_1$ to target $t$, whereas there are $N_2$ parallel paths going through node $k_2$, the length of which are $L_2+1$. If $L_2=2$ and $N_2=1$ the network is symmetric with respect to $k_1$ and $k_2$ and yields equal pivotality for $k_1$ and $k_2$ in reachability from $s$ to $t$ (second row of Table~\ref{tb:toyexample2}). However, if $N_2 \approx 1$ and $L_2 \gg 2$, intuitively node $k_1$ plays a more pivotal role than $k_2$. On the other hand, as the number $N_2$ of parallel paths going through $k_2$ increases while their length  $L_2+1$ is not significantly much longer than 3, say, $L_2=3$, node $k_2$ will play an increasingly more pivotal role in delivering traffic, information or other commodity from node $s$ to node $t$. Intuitively, there is a trade-off between $N_2$ and $L_2$: more parallel paths going through node $k_2$ will increase its pivotality as it enhances the overall ``capacity'' from node $s$ to node $t$; however larger $L_2$ will diminish its pivotality as longer paths increase the ``cost'' of using these parallel paths.
Despite such intuitions regarding the relative pivotality values of node $k_1$ and  node $k_2$, if $L_2>2$ the SHP pivotality metric will always rank node $k_1$ higher than $k_2$ independently of $N_2$ (for $L_2=2$ gives the same ranking to them). Whereas,  as long as $N_2 >1$, the MF pivotality metric will always rank node $k_2$ higher than node $k_1$ independently of $L_2$. Hence both these two metrics fail to capture the differing roles of node $k_2$ with varying $N_2$ and $L_2$. To evaluate the performance of CH and ATH pivotality metrics in capturing the differing roles of node $k_2$ with varying $N_2$ and $L_2$, some example values are shown in Table~\ref{tb:toyexample2}. Based on these results, the CH pivotality metric ranks node $k_2$ higher than node $k_1$ as long as $N_2>1$, and ranks them the same when $N_2=1$ no matter how large is $L_2$, behaving the same as the MF pivotality metric. However, the ATH pivotality metric ranks successfully node $k_1$ higher than node $k_2$ when $N_2$ is close to 1 and $L_2$ is quite larger than 2.  

\begin{figure}
	\centering
	\includegraphics[width=0.6\textwidth]{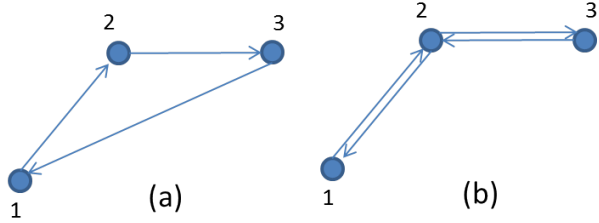}
	\caption{Network example 3}\label{fig:toyExample3}
\end{figure}

\begin{figure*}[htb]
	\centering
	\begin{minipage}[t]{0.45\textwidth}
		\begin{center}
			\includegraphics[width=0.9\textwidth]{./img/FatTree6_colorBar.png}
			\caption{Node pivotality ranking in a Fat-tree network for the reachability of the source node $s$ to target node $t$:  red indicates highest pivotality and black shows non-pivotality.}\label{fig:fatTree}
		\end{center}
	\end{minipage}
	\hspace{2mm}
	\centering
	\begin{minipage}[t]{0.45\textwidth}
		\begin{center}
			\includegraphics[width=0.9\textwidth]{./img/screenshot_170430.png}
			\caption{Node pivotality ranking in the  ESNet network for the reachability of the source node $s$ to target node $t$:  red indicates highest pivotality and black shows non-pivotality.}\label{fig:esnet}
		\end{center}
	\end{minipage}
	\hspace{2mm}
	\centering
\end{figure*}

The subtle difference in the behaviors of the CH and ATH pivotality metrics lies in the term $H_{s}^{\{k\}}$ in eq.(\ref{eq:CH}) vs. the term $H_{s}^{\{k,\overline{t}\}}$ in eq.(\ref{eq:ATH}). Namely, in accounting for the (forced) detour cost, the CH method allows  and includes paths/walks from the source node $s$ to the third node $k$ that may have already traversed the target node $t$; in network example 2, increasing $L_2$ has a destructive effect on the CH pivotality metric of $k_1$ by accounting the paths passing through $t$ before hitting $k_1$, such as the walk $(s-k_2-t-s-k_2-t-...-s-k_1)$, and increasing the term $H_{s}^{\{k_1\}}$ in eq.(\ref{eq:CH}) as the result. In contrast, the ATH method excludes such paths/walks in accounting for the detour cost. As a result, the ATH provides a more precise quantification of the detour cost when a random walker is ``forced'' to transit a third node $k$, and thereby how pivotal a role node $k$ plays in the reachability from a source to a target. 

The ATH metric allows us to identify nodes that are ``superfluous'' with respect to the reachability of a source to a target. This can be best illustrated by the two simple examples shown in Fig.~\ref{fig:toyExample3}. In both examples, consider node $1$ as the source and node $2$ as the target. It is obvious that node 3 is ``superfluous'' with respect to this source-target pair in that node $3$ plays no part in the reachability from node 1 to node 2. In other words, if node 3 fails or is removed from the network, the reachability from node 1 to node 2 (and the associated ``capacity'') is not affected at all. This can be captured by the fact that in both networks  in Figs.~\ref{fig:toyExample3} (a) and (b), 
the probability of hitting node 3 before node  2 is zero, i.e., $Q^{\{3,\overline{2}\}}_{1}=0$.
Thus the denominator of the term $H_{1}^{\{3,\overline{2}\}}$ in eq.(\ref{eq:avoidanceH}) becomes zero and thus  $H_{1}^{\{3,\overline{2}\}}=\infty$. This renders 
$e_{ATH}(3)=-\infty$ (see eq.(\ref{eq:ATH})), indicating the \emph{non-pivotality} of  node 3. In contrast, the CH metric and SHP metric yield $e_{CH}(3)=-3$ and $e_{SHP}(3)=-3$ for Fig.~\ref{fig:toyExample3}(a) and $e_{CH}(3)=-4$ and $e_{SHP}(3)=-2$ for Fig.~\ref{fig:toyExample3}(b) respectively.

\subsection{ Node Pivotality Ranking using the  ATH Metric}
Lastly we  apply the node pivotality ranking using our ATH metric to two real-world networks: Fat-Tree~\cite{leiserson1985fat} and the ESNet~\cite{esnet}. 
Fat-tree is a special $h$-ary ($h \geq 2$) ``tree-shaped'' structure first proposed in~\cite{leiserson1985fat} for efficient communication with uniform bi-section bandwidth, and for this reason  it has been adopted in data center networks~\cite{fat-tree-dc}.  Fig. (\ref{fig:fatTree}) shows 3-ary fat-tree structure with 99 nodes, where the node colors are shaded based on their ATH pivotality measures with respect to the reachability from the source $s$ to the target node $t$. In the figure,  the color spectrum from red to white and then to black shows the range of the ATH value from high to low: the nodes with the  larger ATH value, are more pivotal to the reachability from $s$ to $t$ are  represented with red and ``reddish'' colors; in contrast,   the nodes that play no part in the reachability from $s$ to $t$ are represented with black color. The results for the ESNet, the DoE energy science network with 68 nodes~\cite{esnet} are shown in  Fig.~(\ref{fig:esnet}). Both examples illustrate the efficacy of the ATH metric in correctly capturing and ranking the pivotality of nodes in the reachability from a source node to a target node. Due to space limitation, we do not elaborate on them.

\section*{Acknowledgement}
The research was supported in part by  US DoD DTRA grants HDTRA1-09-1-0050
and HDTRA1-14-1-0040, and ARO MURI Award W911NF-12-1-0385.

\newpage
\appendix
\section{More on Avoidance and Transit Random Walk Metrics}
The following two lemmas would be used for the proof of theorems later in the appendix. For the proofs of these lemmas please refer to \cite{golnari2017random}.
\begin{lemma}[Incremental Computation of Fundamental Matrix] 
	\label{Nlemma1}
	The fundamental matrix for target set of $\mathcal{O}_1\cup \mathcal{O}_2$ can be computed from the fundamental matrix for target set $\mathcal{O}_1$:
	\begin{equation} \label{eq:Nlemma1}
	F_{im}^{\{\mathcal{O}_1, \mathcal{O}_2\}}=F_{im}^{\{\mathcal{O}_1\}}-F_{i\mathcal{O}_2}^{\{\mathcal{O}_1\}}({F_{\mathcal{O}_2\mathcal{O}_2}^{\{\mathcal{O}_1\}}})^{-1}F_{\mathcal{O}_2m}^{\{\mathcal{O}_1\}},
	\end{equation}
	where the subscripts represent the rows and columns selected from the matrix respectively, e.g. $F_{i\mathcal{O}_2}^{\{\mathcal{O}_1\}}$ denotes the row $i$ and the columns corresponding to set $\mathcal{O}_2$ of the fundamental matrix $F^{\{\mathcal{O}_1\}}$.
\end{lemma}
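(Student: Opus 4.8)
The plan is to reduce the identity to a purely linear-algebraic fact about block inverses, since by definition $F^{\{\mathcal{O}_1\}} = (I - P_{\mathcal{T}\mathcal{T}}^{\{\mathcal{O}_1\}})^{-1}$ and $F^{\{\mathcal{O}_1,\mathcal{O}_2\}} = (I - P_{\mathcal{R}\mathcal{R}})^{-1}$, where $\mathcal{R}$ denotes the states that remain transient once $\mathcal{O}_2$ is adjoined to the absorbing set. First I would partition the transient states of the $\{\mathcal{O}_1\}$-chain into $\mathcal{R}$ and $\mathcal{O}_2$, so that
\[ I - P_{\mathcal{T}\mathcal{T}}^{\{\mathcal{O}_1\}} = \begin{bmatrix} I - P_{\mathcal{R}\mathcal{R}} & -P_{\mathcal{R}\mathcal{O}_2} \\ -P_{\mathcal{O}_2\mathcal{R}} & I - P_{\mathcal{O}_2\mathcal{O}_2} \end{bmatrix}. \]
The leading block inverted on its own is exactly $F^{\{\mathcal{O}_1,\mathcal{O}_2\}}$, so the claim becomes a statement relating the inverse of a leading block to the blocks of the inverse of the whole matrix.

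The key algebraic step is the identity $M_{11}^{-1} = N_{11} - N_{12} N_{22}^{-1} N_{21}$, valid whenever $N = M^{-1}$ is partitioned conformally with $M$ and $M_{11}$, $M_{22}$ are invertible. I would establish this from the standard block-inverse formulas $N_{12} = -M_{11}^{-1} M_{12} N_{22}$, $N_{21} = -N_{22} M_{21} M_{11}^{-1}$, together with the Woodbury expansion $N_{11} = M_{11}^{-1} + M_{11}^{-1} M_{12} N_{22} M_{21} M_{11}^{-1}$; substituting these, the product $N_{12} N_{22}^{-1} N_{21}$ collapses to precisely the correction term in $N_{11}$, leaving $M_{11}^{-1}$. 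Applying this with $M = I - P_{\mathcal{T}\mathcal{T}}^{\{\mathcal{O}_1\}}$ and $N = F^{\{\mathcal{O}_1\}}$, and reading the blocks as $N_{11} = F_{\mathcal{R}\mathcal{R}}^{\{\mathcal{O}_1\}}$, $N_{12} = F_{\mathcal{R}\mathcal{O}_2}^{\{\mathcal{O}_1\}}$, $N_{22} = F_{\mathcal{O}_2\mathcal{O}_2}^{\{\mathcal{O}_1\}}$, $N_{21} = F_{\mathcal{O}_2\mathcal{R}}^{\{\mathcal{O}_1\}}$, yields
\[ F^{\{\mathcal{O}_1,\mathcal{O}_2\}} = F_{\mathcal{R}\mathcal{R}}^{\{\mathcal{O}_1\}} - F_{\mathcal{R}\mathcal{O}_2}^{\{\mathcal{O}_1\}} \big(F_{\mathcal{O}_2\mathcal{O}_2}^{\{\mathcal{O}_1\}}\big)^{-1} F_{\mathcal{O}_2\mathcal{R}}^{\{\mathcal{O}_1\}}, \]
which is the claimed formula read entrywise for $i,m \in \mathcal{R}$.

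As an independent check that also explains the formula, I would give a probabilistic derivation using the strong Markov property. Running the $\{\mathcal{O}_1\}$-chain from $i$ and splitting its visits to $m$ at the first entrance into $\mathcal{O}_2$, the visits accrued strictly before $\mathcal{O}_2$ is reached are exactly those counted before absorption by $\mathcal{O}_1 \cup \mathcal{O}_2$, contributing $F_{im}^{\{\mathcal{O}_1,\mathcal{O}_2\}}$, while the remaining visits contribute $\sum_{j \in \mathcal{O}_2} H_{ij} F_{jm}^{\{\mathcal{O}_1\}}$, where $H_{ij}$ is the probability that the first visit to $\mathcal{O}_2$ occurs at $j$ before absorption by $\mathcal{O}_1$. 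Since every visit to a column in $\mathcal{O}_2$ must occur on or after that first entrance, one has $F_{\cdot\mathcal{O}_2}^{\{\mathcal{O}_1\}} = H\, F_{\mathcal{O}_2\mathcal{O}_2}^{\{\mathcal{O}_1\}}$, hence $H = F_{\cdot\mathcal{O}_2}^{\{\mathcal{O}_1\}} \big(F_{\mathcal{O}_2\mathcal{O}_2}^{\{\mathcal{O}_1\}}\big)^{-1}$, and the two routes agree.

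The main obstacle I anticipate is bookkeeping rather than any deep difficulty: in the algebraic route, keeping the left/right placement of the inverses correct while verifying the block-inverse identity; in the probabilistic route, rigorously justifying the decomposition at the random first-entrance time. Invertibility of $F_{\mathcal{O}_2\mathcal{O}_2}^{\{\mathcal{O}_1\}}$ is not an issue, since it is the fundamental matrix of a genuine absorbing chain (the $\{\mathcal{O}_1\}$-chain watched on $\mathcal{O}_2$) and hence nonsingular, so the inverse appearing in the statement is always well-defined.
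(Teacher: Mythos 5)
The paper itself contains no proof of this lemma---it explicitly defers to the external reference \cite{golnari2017random}---so there is no in-document argument to compare against; your proof must stand on its own, and it does. The block-inverse route is correct: with $M = I - P_{\mathcal{T}\mathcal{T}}$ partitioned along $\mathcal{R} = \mathcal{T}\setminus\mathcal{O}_2$ versus $\mathcal{O}_2$, the identity $M_{11}^{-1} = N_{11} - N_{12}N_{22}^{-1}N_{21}$ for $N = M^{-1}$ follows exactly as you say from $N_{12} = -M_{11}^{-1}M_{12}N_{22}$, $N_{21} = -N_{22}M_{21}M_{11}^{-1}$, and the Woodbury expansion of $N_{11}$; reading off the blocks as submatrices of $F^{\{\mathcal{O}_1\}}$ gives precisely Eq.~(\ref{eq:Nlemma1}). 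Two points worth tightening. First, invertibility of $M_{11} = I - P_{\mathcal{R}\mathcal{R}}$ deserves an explicit word: it holds because the chain with absorbing set $\mathcal{O}_1\cup\mathcal{O}_2$ is still a genuine absorbing chain (from every state of $\mathcal{R}$ the set $\mathcal{O}_1$, and a fortiori $\mathcal{O}_1\cup\mathcal{O}_2$, remains reachable). Second, once $M$ and $M_{11}$ are known to be invertible, the invertibility of $N_{22} = F_{\mathcal{O}_2\mathcal{O}_2}^{\{\mathcal{O}_1\}}$ comes for free from the block-inverse formula $N_{22} = \bigl(M_{22} - M_{21}M_{11}^{-1}M_{12}\bigr)^{-1}$, so you do not actually need the censored-chain argument; that argument is correct in spirit but itself requires a strong-Markov justification that the watched chain's visit counts coincide with the submatrix $F_{\mathcal{O}_2\mathcal{O}_2}^{\{\mathcal{O}_1\}}$, which is more work than the linear algebra it replaces. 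Your probabilistic cross-check---splitting the visit counts at the first entrance time into $\mathcal{O}_2$ and eliminating the first-entrance distribution $H$ via $F_{\cdot\mathcal{O}_2}^{\{\mathcal{O}_1\}} = H\,F_{\mathcal{O}_2\mathcal{O}_2}^{\{\mathcal{O}_1\}}$---is also sound and is the natural ``Markov-chain'' proof of the identity; either route alone would suffice.
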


\begin{lemma}[Absorption Probability and Normalized Fundamental Matrix] \label{Nlemma3}
	The absorption probability for absorbing set $\{j\}\cup \mathcal{O}$ can be found from the fundamental matrix for absorbing set $\mathcal{O}$:
	\begin{equation} 
	Q_i^{\{j,\overline{\mathcal{O}}\}}=\frac{F_{ij}^{\{\mathcal{O}\}}}{F_{jj}^{\{\mathcal{O}\}}}
	\end{equation}
\end{lemma}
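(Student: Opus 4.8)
\emph{The plan is to} reduce the lemma to the single scalar identity
\[
F_{ij}^{\{\mathcal{O}\}} \;=\; Q_i^{\{j,\overline{\mathcal{O}}\}}\, F_{jj}^{\{\mathcal{O}\}},
\]
after which the stated formula follows by dividing through by $F_{jj}^{\{\mathcal{O}\}}$, which is nonzero because the chain started at $j$ already records one visit to $j$ at time $0$, giving $F_{jj}^{\{\mathcal{O}\}}\ge 1$. First I would fix the interpretations of the two sides in the chain with absorbing set $\mathcal{O}$ (so the node $j\notin\mathcal{O}$ is transient): $F_{ij}^{\{\mathcal{O}\}}=\mathbb{E}_i(\nu_j)$ is the expected number of visits to $j$ before absorption, while $Q_i^{\{j,\overline{\mathcal{O}}\}}$, in the sense of Eq.~(\ref{eq:Q}), is the probability that the enlarged chain absorbing at $\{j\}\cup\mathcal{O}$ is absorbed at $j$ rather than at $\mathcal{O}$ --- equivalently, the probability that the walk started at $i$ reaches $j$ strictly before being absorbed by $\mathcal{O}$.

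\emph{The key step} is a first-passage (renewal) decomposition of $\nu_j$. Writing $\tau_j=\inf\{k\ge 0:X_k=j\}$ for the first hitting time of $j$ and $\kappa_{\mathcal{O}}$ for the absorption time by $\mathcal{O}$, I would condition on whether $j$ is reached before absorption:
\[
\mathbb{E}_i(\nu_j)=\mathbb{E}_i\!\left(\nu_j\,;\,\tau_j<\kappa_{\mathcal{O}}\right)+\mathbb{E}_i\!\left(\nu_j\,;\,\tau_j\ge\kappa_{\mathcal{O}}\right).
\]
On the second event the walk is absorbed by $\mathcal{O}$ without ever touching $j$, so $\nu_j=0$ there and the term vanishes. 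On the first event, the strong Markov property applied at the stopping time $\tau_j$ restarts the chain afresh from $j$, still with absorbing set $\mathcal{O}$ since $\mathcal{O}$ has not yet been hit; because no visit to $j$ is counted before $\tau_j$, the conditional expectation of $\nu_j$ on this event equals $\mathbb{E}_j(\nu_j)=F_{jj}^{\{\mathcal{O}\}}$, independent of $i$ and of the particular path used to reach $j$. Since $\mathbb{P}_i(\tau_j<\kappa_{\mathcal{O}})=Q_i^{\{j,\overline{\mathcal{O}}\}}$, this produces exactly the product identity above.

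\emph{The main obstacle} is the careful bookkeeping in the strong Markov step: one must confirm that $\tau_j$ is a genuine stopping time (it is, being a first hitting time), that absorption by $\mathcal{O}$ has truly not occurred at $\tau_j$ on the conditioning event, and that the ``memoryless'' return count $F_{jj}^{\{\mathcal{O}\}}$ does not depend on how $j$ was reached. A purely linear-algebraic alternative sidesteps these measure-theoretic points: adjoining $j$ to the absorbing set, one expands $Q_i^{\{j,\overline{\mathcal{O}}\}}=\sum_{k\in\mathcal{T}\setminus\{j\}}F_{ik}^{\{j,\mathcal{O}\}}P_{kj}$ by the classical absorption formula of Eq.~(\ref{eq:Q}), substitutes the incremental identity of Lemma~(\ref{Nlemma1}) with $\mathcal{O}_2=\{j\}$ so that $(F_{jj}^{\{\mathcal{O}\}})^{-1}$ becomes a scalar reciprocal, and simplifies the resulting sums using the backward recursion $F^{\{\mathcal{O}\}}=I+F^{\{\mathcal{O}\}}P_{\mathcal{T}\mathcal{T}}$; the $(1-P_{jj})$ contributions cancel and leave precisely $F_{ij}^{\{\mathcal{O}\}}/F_{jj}^{\{\mathcal{O}\}}$. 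I expect the probabilistic route to give the cleaner write-up, with the algebraic computation serving as an independent check.
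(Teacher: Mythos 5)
Your proof is correct. One point of comparison worth noting: the paper does not actually prove this lemma internally --- it defers both appendix lemmas to the external reference \cite{golnari2017random} --- so your argument supplies a self-contained derivation where the paper offers only a citation. Your main (probabilistic) route is the standard and cleanest one: since $\nu_j=\sum_{k\ge 0}1_{\{X_k=j\}}$ counts no visits before $\tau_j$, the decomposition over $\{\tau_j<\kappa_{\mathcal{O}}\}$ and its complement, together with the strong Markov property at $\tau_j$, gives
\begin{equation}
F_{ij}^{\{\mathcal{O}\}}=\mathbb{E}_i(\nu_j)=\mathbb{P}_i(\tau_j<\kappa_{\mathcal{O}})\,\mathbb{E}_j(\nu_j)=Q_i^{\{j,\overline{\mathcal{O}}\}}F_{jj}^{\{\mathcal{O}\}}, \nonumber
\end{equation}
and the division is legitimate because $j$ is transient in the chain absorbed by $\mathcal{O}$ and the time-$0$ visit is counted, so $1\le F_{jj}^{\{\mathcal{O}\}}<\infty$. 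The identification $Q_i^{\{j,\overline{\mathcal{O}}\}}=\mathbb{P}_i(\tau_j<\kappa_{\mathcal{O}})$ is exactly the meaning of absorption at $j$ in the enlarged chain, so that step is sound as well. Your algebraic backup also works and is a nice consistency check with Lemma~(\ref{Nlemma1}), with one small caveat: the expansion $Q_i^{\{j,\overline{\mathcal{O}}\}}=\sum_{k}F_{ik}^{\{j,\mathcal{O}\}}P_{kj}$ is valid only for $i\neq j$ (for $i=j$ the state $i$ is no longer transient in the enlarged chain, so the row-wise formula $Q=FP_{\mathcal{T}\mathcal{A}}$ does not apply; but then both sides of the lemma are trivially $1$). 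You should also state explicitly the standing assumption $j\notin\mathcal{O}$, which the lemma implicitly requires for $F_{ij}^{\{\mathcal{O}\}}$ to be defined.
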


\subsection{Avoidance Fundamental Matrix} \label{apx:avoidFundMat}
The indicator function $1_{\{X_k=m\}}$ is the random variable equal
to 1 if $X_k=m$ and 0 otherwise. The number of visits $\nu_m$ to
$m$ is written in terms of indicator functions as $\nu_m=\sum_{k=0}^{\infty} 1_{\{X_k=m\}}$.
The avoidance fundamental matrix (or avoidance number of visits), which we introduce as the conditional expectation of number of visits conditioned on avoiding set $\mathcal{O}$, is obtained as follows. Recall that $\kappa_t$ is the stopping criteria for the walk. 
\begin{eqnarray}
F_{sm}^{\{t,\overline{\mathcal{O}}\}}&=&\mathbb{E}_s(\nu_m|X_{i\leq \kappa_t}\notin \mathcal{O})= \sum_{k=0} \mathbb{E}_s(1_{\{X_k=m\}}|X_{i\leq \kappa_t}\notin \mathcal{O}) \nonumber
\\&=& \sum_{k=0} \mathbb{P}(X_k=m|X_{i\leq \kappa_t}\notin \mathcal{O},X_0=s) \nonumber
\\&=& \frac{\sum_{k=0}\mathbb{P}(X_k=m,X_{i\leq \kappa_t}\notin \mathcal{O}|X_0=s)}{\mathbb{P}(X_{i\leq \kappa_t}\notin \mathcal{O}|X_0=s)} \nonumber
\\&=& \frac{\sum_{k=0}\mathbb{P}(X_k=m,X_{i< k}\notin \mathcal{O},X_{k<i\leq \kappa_t}\notin \mathcal{O}|X_0=s)}{\mathbb{P}(X_{i\leq \kappa_t}\notin \mathcal{O}|X_0=s)} \nonumber
\\&=& \frac{\sum_{k=0}\mathbb{P}(X_k=m,X_{i< k}\notin \mathcal{O}|X_0=s)\mathbb{P}(X_k=m,X_{k<i\leq \kappa_t}\notin \mathcal{O}|X_0=s)}{\mathbb{P}(X_{i\leq \kappa_t}\notin \mathcal{O}|X_0=s)} \nonumber
\\&=& \frac{\sum_{k=0}\mathbb{P}(X_k=m,X_{i< k}\notin \mathcal{O}|X_0=s)\mathbb{P}(X_{0<i\leq \kappa_t}\notin \mathcal{O}|X_0=m)}{\mathbb{P}(X_{i\leq \kappa_t}\notin \mathcal{O}|X_0=s)} \nonumber
\\&=& \frac{\sum_{k=0}[P_{\mathcal{TT}}^{k}]_{sm}\sum_{k=1} [P_{\mathcal{TT}}^{k-1}P_{\mathcal{TA}}]_{mt}}{\sum_{k=1} [P_{\mathcal{TT}}^{k-1}P_{\mathcal{TA}}]_{st}} \label{eq:general_avoidanceF} \nonumber
\\&=& \frac{F_{sm}^{\{t,\mathcal{O}\}}\sum_{k=1} [P_{\mathcal{TT}}^{k-1}P_{\mathcal{TA}}]_{mt}}{\sum_{k=1} [P_{\mathcal{TT}}^{k-1}P_{\mathcal{TA}}]_{st}}, 
\end{eqnarray}

Now if terms $\sum_{k=1} [P_{\mathcal{TT}}^{k-1}P_{\mathcal{TA}}]_{it}$, for $i=m,s$ in numerator and denominator, are replaced with the following relation, the simplified expression in (\ref{eq:avoidanceF}) is obtained:
\begin{eqnarray}
\sum_k [P^{k-1}_{\mathcal{TT}}P_{\mathcal{TA}}]_{it} 
&=& u'_i( I+ P_{\mathcal{TT}}+P^{2}_{\mathcal{TT}}+...) P_{\mathcal{TA}}u_t \nonumber
\\&=&u'_i(I-P_{\mathcal{TT}})^{-1}P_{\mathcal{TA}}u_t \nonumber
\\&=&u'_i F P_{\mathcal{TA}}u_t \nonumber
\\&=&u'_i Q^{\{t,\overline{\mathcal{O}}\}} \nonumber
\\&=& Q_i^{\{t,\overline{\mathcal{O}}\}}, \label{eq:PP}
\end{eqnarray}
where $u_i$ is a column vector of all zeros but $i$-th entry equal to 1.

\subsection{Avoidance Hitting Time}\label{apx:avoidHitTime}
The hitting time of a node $t\in V$ is the random variable $\kappa_{t}:\Omega\rightarrow\{0,1,2,...\}\cup\{\infty\}$
given by
$\kappa_{t}=\inf{\{\kappa\geq 0: X_{\kappa}=t\}}$,
where we agree that the infimum of the empty set $\emptyset$ is $\infty$. The hitting time $\kappa_{t}$ represents the number of steps that the walk takes until it hits $t$ for the first time. The avoidance (expected) hitting time from $s$ to $t$ conditioned on avoiding set $\mathcal{O}$ is defined as follows
\begin{eqnarray}
H_s^{\{t,\overline{\mathcal{O}}\}}=\mathbb{E}_s[\kappa_t|X_{i\leq \kappa_t}\notin \mathcal{O}]&=&\sum_{k=1} k\mathbb{P}(X_k=t|X_{i\leq \kappa_t}\notin \mathcal{O},X_0=s) \nonumber
\\&=&\sum_{k=1} k \frac{\mathbb{P}(X_k=t,X_{i\leq k}\notin \mathcal{O}|X_0=s)}{\mathbb{P}(X_{i\leq \kappa_t}\notin \mathcal{O}|X_0=s)} \nonumber
\\&=&\frac{\sum_{k=1} k \mathbb{P}(X_k=t,X_{i\leq k}\notin \mathcal{O}|X_0=s)}{\mathbb{P}(X_{i\leq \kappa_t}\notin \mathcal{O}|X_0=s)} \nonumber
\\&=&\frac{\sum_{k=1} k \mathbb{P}(X_k=t,X_{i\leq k}\notin \mathcal{O}|X_0=s)}{\sum_{k=1}\mathbb{P}(\kappa_t=k,X_{i\leq k}\notin \mathcal{O}|X_0=s)} \nonumber
\\&=&\frac{\sum_{k=1} k \mathbb{P}(X_k=t,X_{i\leq k}\notin \mathcal{O}|X_0=s)}{\sum_{k=1}\mathbb{P}(X_k=t,X_{i\leq k}\notin \mathcal{O}|X_0=s)} \nonumber 
\\&=&\frac{\sum_{k=1} k[P_{\mathcal{TT}}^{k-1}P_{\mathcal{TA}}]_{st}}{\sum_{k=1} [P_{\mathcal{TT}}^{k-1}P_{\mathcal{TA}}]_{st}}, \label{eq:general_avoidanceH}
\end{eqnarray}
here the transient set is $\mathcal{T}=V\setminus (\mathcal{O}\cup\{t\})$. Substituting the numerator with the following relation (\ref{eq:kPP}) and the denominator with (\ref{eq:PP}), the simplified expression for avoidance hitting time in (\ref{eq:avoidanceH}) is obtained.
\begin{eqnarray}
\sum_k k[P^{k-1}_{\mathcal{TT}}P_{\mathcal{TA}}]_{st} \nonumber
&=& u'_s( I+ 2P_{\mathcal{TT}}+3P^{2}_{\mathcal{TT}}+...) P_{\mathcal{TA}}u_t \nonumber
\\&=&u'_s(I-P_{\mathcal{TT}})^{-2}P_{\mathcal{TA}}u_t \nonumber
\\&=&u'_s F^2 P_{\mathcal{TA}}u_t \nonumber
\\&=&u'_s F Q^{\{t,\overline{\mathcal{O}}\}} \nonumber
\\&=& \sum_m F_{sm}^{\{t,\mathcal{O}\}} Q_m^{\{t,\overline{\mathcal{O}}\}} \label{eq:kPP}
\end{eqnarray}

The relation between avoidance hitting time and avoidance fundamental matrix is as follows:
\begin{equation} \label{h=N1}
H^{\{t,\overline{\mathcal{O}}\}}_s=\sum_{m}F^{\{t,\overline{\mathcal{O}}\}}_{sm}=F^{\{t,\overline{\mathcal{O}}\}}_{s:}\textbf{1},
\end{equation}
where $F^{\{t,\overline{\mathcal{O}}\}}_{s:}$ is the $s$-th row of avoidance fundamental matrix.

\subsection{Avoidance Hitting Cost} \label{apx:avoidHitCost}
Let $G={(X_k)}_{k>0}$ be a discrete-time Markov chain with transition matrix $P$ and weight matrix $W$. The hitting cost of a node $t\in V$ is a random variable $\eta_{t}:\Omega\rightarrow \mathcal{C}$ given by
$\eta_{t}=\inf{\{\eta\geq 0: \exists k, X_k=t, \sum_{i=1}^k w_{X_{i-1}X_i}=\eta\}}$, where $\mathcal{C}$ is a countable set. Let $Z_{st}^{\overline{\mathcal{O}}}$ be the set of all walks from $s$ to $t$ which avoid set $\mathcal{O}$, and $\zeta_j$ be the $j$-th walk from this set. We also use $Z_{sm}^{\overline{\mathcal{O}}}(l)$ to denote the subset of walks whose total length is $l$, and $Z_{sm}^{\overline{\mathcal{O}}}(k,l)$ to specify the walks which have total length of $l$ and total step size of $k$. Avoidance (expected) hitting cost from $s$ to $t$ conditioned on avoiding $\mathcal{O}$ is defined as follows:

\begin{eqnarray}
U_s^{\{t,\overline{\mathcal{O}}\}}&=&\mathbb{E}_s[\eta_t|X_k=t, X_{i\leq k}\notin \mathcal{O}]=\sum_{l\in\mathcal{C}} l\mathbb{P}(\eta_t=l|X_k=t, X_{i\leq k}\notin \mathcal{O}, X_0=s) \nonumber
\\&=&\frac{\sum_{l\in\mathcal{C}} l \mathbb{P}(\eta_t=l,X_k=t, X_{i\leq k}\notin \mathcal{O}| X_0=s)}{\mathbb{P}(X_k=t, X_{i\leq k}\notin \mathcal{O}| X_0=s)}
\\&=&\frac{\sum_{l\in\mathcal{C}} l\sum_{k=1}^{<\infty}\mathbb{P}(\sum_{i=1}^k w_{X_{i-1}X_i}=l,X_k=t, X_{i\leq k}\notin \mathcal{O}|X_0=s)}{\sum_{k=1}^{<\infty}\mathbb{P}(X_k=t,X_{i\leq k}\notin \mathcal{O}|X_0=s)} \nonumber
\\&=& \frac{\sum_{l\in\mathcal{C}}l\sum_{k=1}^{<\infty}\sum_{\zeta_j\in Z_{st}^{\overline{\mathcal{O}}}(k,l)} \textrm{Pr}_{\zeta_j}}{\sum_{k=1}^{<\infty}\sum_{\zeta_j\in Z_{st}^{\overline{\mathcal{O}}}(k)} \textrm{Pr}_{\zeta_j}} \label{eq:avoidUform2}
\\&=& \frac{\sum_{l\in\mathcal{C}}l\sum_{\zeta_j\in Z_{st}^{\overline{\mathcal{O}}}(l)} \textrm{Pr}_{\zeta_j}}{\sum_{\zeta_j\in Z_{st}^{\overline{\mathcal{O}}}} \textrm{Pr}_{\zeta_j}} \nonumber
\\&=& \frac{\sum_{l\in\mathcal{C}}l\sum_{\zeta_j\in Z_{st}^{\overline{\mathcal{O}}}(l)} \textrm{Pr}_{\zeta_j}}{\sum_{l\in\mathcal{C}}\sum_{\zeta_j\in Z_{st}^{\overline{\mathcal{O}}}(l)} \textrm{Pr}_{\zeta_j}} \nonumber
\\&=& \frac{\sum_{l\in\mathcal{C}}l \textrm{Pr}_l^{\overline{\mathcal{O}}}}{\sum_{l\in\mathcal{C}} \textrm{Pr}_l^{\overline{\mathcal{O}}}} \label{eq:general_avoidanceU}
\end{eqnarray}
where $\textrm{Pr}_l^{\overline{\mathcal{O}}}$ is the probability of hitting $t$ in total length of $l$ when starting from $s$ and avoiding set $\mathcal{O}$. It is obtained from the aggregation of walk probabilities with length $l$ which avoid set $\mathcal{O}$. Therefor, the following three quantities are all the same: $\textrm{Pr}_l^{\overline{\mathcal{O}}}=\sum_{\zeta_j\in Z_{st}^{\overline{\mathcal{O}}}(l)} \textrm{Pr}_{\zeta_j}=\mathbb{P}(\eta_t=l|X_{i\leq k}\notin \mathcal{O}, X_k=t, X_0=s)$. 

We can also continue (\ref{eq:avoidUform2}) as an aggregation over walks with specified lengths to achieve another form of avoidance hitting cost and derive the form presented in (\ref{eq:avoidanceU}):
\begin{eqnarray}
U_s^{\{t,\overline{\mathcal{O}}\}}&=&\frac{\sum_{l\in\mathcal{C}}l\sum_{k=1}^{<\infty}\sum_{\zeta_j\in Z_{st}^{\overline{\mathcal{O}}}(k,l)} \textrm{Pr}_{\zeta_j}}{\sum_{k=1}^{<\infty}\sum_{\zeta_j\in Z_{st}^{\overline{\mathcal{O}}}(k)} \textrm{Pr}_{\zeta_j}}
\\&=& \frac{\sum_{l\in\mathcal{C}}\sum_{k=1}^{<\infty}\sum_{\zeta_j\in Z_{st}^{\overline{\mathcal{O}}}(k,l)}l_{\zeta_j} \textrm{Pr}_{\zeta_j}}{\sum_{k=1}^{<\infty}\sum_{\zeta_j\in Z_{st}^{\overline{\mathcal{O}}}(k)} \textrm{Pr}_{\zeta_j}} \nonumber
\\&=& \frac{\sum_{\zeta_j\in Z_{st}^{\overline{\mathcal{O}}}}l_{\zeta_j} \textrm{Pr}_{\zeta_j}}{\sum_{\zeta_j\in Z_{st}^{\overline{\mathcal{O}}}} \textrm{Pr}_{\zeta_j}}
\\&=&\frac{\sum_{\zeta_j\in Z_{st}^{\overline{\mathcal{O}}}} \textrm{Pr}_{\zeta_j} \sum_{k=1}^{k_{\zeta_j}} w_{v_{k-1}v_k}}{Q_s^{\{t,\overline{\mathcal{O}}\}}} \nonumber
\\&=& \frac{\sum_{\zeta_j\in Z_{st}^{\overline{\mathcal{O}}}}\sum_{k=1}^{k_{\zeta_j}}[\prod_{i=1}^{k}P_{v_{i-1}v_i}(P_{v_kv_{k+1}}w_{v_kv_{k+1}})\prod_{i=k+2}^{k_{\zeta_j}}P_{v_{i-1}v_i}]}{Q_s^{\{t,\overline{\mathcal{O}}\}}} \nonumber
\\&=& \frac{\sum_{e_{xy}\in E,x\in \mathcal{T},y\in\mathcal{T}\cup\{t\}}P_{xy}w_{xy}(\sum_{\zeta_j\in Z_{sx}^{\overline{\mathcal{O}}}}\textrm{Pr}_{\zeta_j})\cdot(\sum_{\zeta_i\in Z_{yt}^{\overline{\mathcal{O}}}}\textrm{Pr}_{\zeta_i})}{Q_s^{\{t,\overline{\mathcal{O}}\}}} \nonumber
\\&=& \frac{\sum_{e_{xy}\in E,x\in \mathcal{T},y\in\mathcal{T}\cup\{t\}}P_{xy}w_{xy}(\sum_k\sum_{\zeta_j\in Z_{sx}^{\overline{\mathcal{O}}}(k)}\textrm{Pr}_{\zeta_j})\cdot(\sum_k\sum_{\zeta_i\in Z_{yt}^{\overline{\mathcal{O}}}(k)}\textrm{Pr}_{\zeta_i})}{Q_s^{\{t,\overline{\mathcal{O}}\}}} \nonumber
\\&=& \frac{\sum_{e_{xy}\in E,x\in \mathcal{T},y\in\mathcal{T}\cup\{t\}}P_{xy}w_{xy}(\sum_{\substack{k}} [P_{\mathcal{TT}}^k]_{sx})\cdot(\sum_{\substack{k}} [P_{\mathcal{TT}}^{k-1}P_{\mathcal{TA}}]_{yt})}{Q_s^{\{t,\overline{\mathcal{O}}\}}} \nonumber
\\&=&\frac{\sum_{e_{xy}\in E,x\in \mathcal{T},y\in\mathcal{T}\cup\{t\}}P_{xy}w_{xy}F_{sx}^{\{t,\mathcal{O}\}}Q_y^{\{t,\overline{\mathcal{O}}\}}}{Q_s^{\{t,\overline{\mathcal{O}}\}}} \nonumber
\\&=&\frac{\sum_{x\in \mathcal{T}} F_{sx}^{\{t,\mathcal{O}\}} Q_x^{\{t,\overline{\mathcal{O}}\}} \sum_{y\in \mathcal{N}_{out}(x)\setminus \mathcal{O}} p_{xy}\frac{Q_y^{\{t,\overline{\mathcal{O}}\}}}{Q_x^{\{t,\overline{\mathcal{O}}\}}}w_{xy}}{Q_s^{\{t,\overline{\mathcal{O}}\}}} \nonumber
\\&=&\frac{\sum_{x\in \mathcal{T}} F_{sx}^{\{t,\mathcal{O}\}} Q_x^{\{t,\overline{\mathcal{O}}\}} r^{t,\overline{\mathcal{O}}}_x}{Q_s^{\{t,\overline{\mathcal{O}}\}}},
\end{eqnarray}		
where the transient set here is  equal to $\mathcal{T}=V\setminus\mathcal{O}\cup\{t\}$.

\newpage
  \bibliographystyle{elsarticle-num} 
  \bibliography{main}


%
%
%
\end{document}